\documentclass[11pt]{article}

\usepackage[left=1cm,right=1cm,
    top=1.5cm,bottom=1.5cm,bindingoffset=0cm]{geometry}
    \usepackage{amssymb, amsmath, amsfonts, amsthm}
        \newcommand{\R}{\mathbb R}
        \newcommand{\Z}{\mathbb Z}

        \renewcommand{\P}{\mathbb P}
         \newcommand{\V}{\mathbb {V}}
          \newcommand{\Id}{ {I}}
           \newcommand{\PGL}{\P\mathrm{GL}}
           \newcommand{\Poly}{\mathcal{P}}
 
    \setlength{\unitlength}{1cm}
\usepackage{hyperref}
\usepackage{bbold}
\usepackage{mathrsfs}
\usepackage{MnSymbol}
\usepackage{fullpage}
\usepackage{color}
\usepackage{graphicx}
\usepackage{caption}
\usepackage{subcaption}
\usepackage{yfonts}
\usepackage{changepage}
\usepackage{bm}
\usepackage[inline]{enumitem}
\usepackage[sort&compress, numbers]{natbib}
\usepackage{pgfplots}
\pgfplotsset{compat=1.14}
\usetikzlibrary{positioning}
\usetikzlibrary{decorations.text}
\usetikzlibrary{decorations.pathmorphing}
\usetikzlibrary{decorations.markings}
\usetikzlibrary{intersections}
\usetikzlibrary{cd}

\usepackage{mathtools}
\mathtoolsset{showonlyrefs}

\newtheorem{lemma}{Lemma}[section]
\newtheorem{proposition}[lemma]{Proposition}
\newtheorem{theorem}[lemma]{Theorem}

\theoremstyle{definition}

\newtheorem{definition}[lemma]{Definition}

\setlength{\parindent}{0cm}

\usepackage{etoolbox}
\makeatletter
\pretocmd\@maketitle
  {\def\@makefnmark{\hbox{\@textsuperscript{\normalfont\@thefnmark}}}}
  {}{\FAIL}
\makeatother

\title{The limit point of the pentagram map and infinitesimal monodromy}

\author{Quinton Aboud\thanks{Department of Mathematics, University of Arizona, e-mail: \tt{aboud@math.arizona.edu}} and Anton Izosimov\thanks{Department of Mathematics, University of Arizona, e-mail: \tt{izosimov@math.arizona.edu}}}

\date{}

\begin{document}

\maketitle
\abstract{ 

The pentagram map takes a planar polygon $P$ to a polygon $P'$ whose vertices are the intersection
points of consecutive shortest diagonals of $P$. The orbit of a convex polygon under this map is a sequence of polygons which converges exponentially to a point. Furthermore, as recently proved by Glick, coordinates of that limit point can be computed as an eigenvector of a certain operator associated with the polygon. In the present paper we show that Glick's operator can be interpreted as the \textit{infinitesimal monodromy} of the polygon. Namely, there exists a certain natural infinitesimal perturbation of a polygon, which is again a polygon but in general not closed; what Glick's operator measures is the extent to which this perturbed polygon does not close up. }


\section{Introduction}
The pentagram map, introduced by R.\,Schwartz in \cite{schwartz1992pentagram}, is a discrete dynamical system on the space of planar polygons. 
The definition of this map is illustrated in Figure \ref{Fig1}: the image of the polygon $P$ under the pentagram map is the polygon $P'$ whose vertices are the intersection points of consecutive shortest diagonals of~$P$ (i.e.,  diagonals connecting second-nearest vertices). 

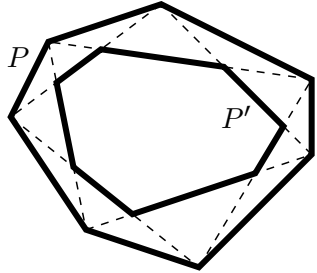
\begin{figure}[h]
\centering
\begin{tikzpicture}[thick, scale = 1]
\coordinate (VK7) at (0,0);
\coordinate (VK6) at (1.5,-0.5);
\coordinate (VK5) at (3,1);
\coordinate (VK4) at (3,2);
\coordinate (VK3) at (1,3);
\coordinate (VK2) at (-0.5,2.5);
\coordinate (VK1) at (-1,1.5);

\draw  [line width=0.8mm]  (VK7) -- (VK6) -- (VK5) -- (VK4) -- (VK3) -- (VK2) -- (VK1) -- cycle;
\draw [dashed, line width=0.2mm, name path=AC] (VK7) -- (VK5);
\draw [dashed,line width=0.2mm, name path=BD] (VK6) -- (VK4);
\draw [dashed,line width=0.2mm, name path=CE] (VK5) -- (VK3);
\draw [dashed,line width=0.2mm, name path=DF] (VK4) -- (VK2);
\draw [dashed,line width=0.2mm, name path=EG] (VK3) -- (VK1);
\draw [dashed,line width=0.2mm, name path=FA] (VK2) -- (VK7);
\draw [dashed,line width=0.2mm, name path=GB] (VK1) -- (VK6);

\path [name intersections={of=AC and BD,by=Bp}];
\path [name intersections={of=BD and CE,by=Cp}];
\path [name intersections={of=CE and DF,by=Dp}];
\path [name intersections={of=DF and EG,by=Ep}];
\path [name intersections={of=EG and FA,by=Fp}];
\path [name intersections={of=FA and GB,by=Gp}];
\path [name intersections={of=GB and AC,by=Ap}];

\draw  [line width=0.8mm]  (Ap) -- (Bp) -- (Cp) -- (Dp) -- (Ep) -- (Fp) -- (Gp) -- cycle;

\node at (-0.9,2.3) () {$P$};
\node at (2,1.5) () {$P'$};

\end{tikzpicture}
\caption{The pentagram map.}\label{Fig1}
\end{figure}

The pentagram map has been an especially popular topic in the last decade, mainly due to its connections with integrability \cite{ovsienko2010pentagram, soloviev2013integrability} and the theory of cluster algebras \cite{GLICK20111019, Gekhtman2016, fock2014loop}. Most works on the the pentagram map regard it as a dynamical system on the space of polygons modulo projective equivalence. And indeed that is the setting where most remarkable features of that map such as integrability reveal themselves. That said, the pentagram map on actual {polygons} (as opposed to projective equivalence classes) also has interesting geometry. One of the early results in this direction was Schwartz's proof of the exponential convergence of successive images of a convex polygon under the pentagram map to a point (see Figure \ref{Fig2}). That limit point is a natural invariant of a polygon and can be thought of as a projectively natural version of the center of mass. However, it is not clear a priori whether this limit point can be expressed in terms of coordinates of the vertices by any kind of an explicit formula. A remarkable recent result by M.\,Glick \cite{glick2020limit} is that this dependence is in fact algebraic. Moreover, there exists an operator in $\R^3$ whose matrix entries are rational in terms of polygon's vertices, while the coordinates of the limit point are given by an eigenvector of that operator. Therefore, coordinates of the limit point can be found by solving a cubic equation.\par

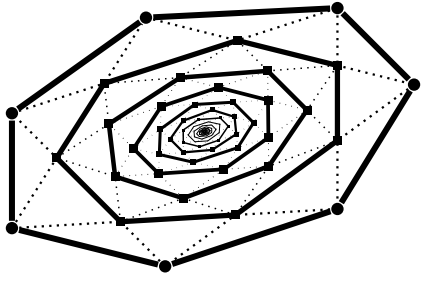
\begin{figure}[t]

\begin{center}
    \begin{tikzpicture}[thick,scale=0.15, mycirc/.style={circle,fill=black, inner sep=1.5pt},
    every node/.style={scale=1.1}]
        \node[mycirc] (A1) at (-16.97,-8.49) {};
        \node[mycirc] (B1) at (-16.97,1.70) {};
        \node[mycirc] (C1) at (-5.09,10.18) {};
        \node[mycirc] (D1) at (11.88,11.03) {};
        \node[mycirc] (E1) at (18.67,4.24) {};
        \node[mycirc] (F1) at (11.88,-6.79) {};
        \node[mycirc] (G1) at (-3.39,-11.88) {};

        \draw[line width=.8mm] (A1) -- (B1) -- (C1) -- (D1) -- (E1) -- (F1) -- (G1) -- (A1);

        \draw[name path=lineAC1,line width=.3mm, dotted] (A1) -- (C1); 
        \draw[name path=lineCE1,line width=.3mm, dotted] (C1) -- (E1);
        \draw[name path=lineEG1,line width=.3mm, dotted] (E1) -- (G1);
        \draw[name path=lineGB1,line width=.3mm, dotted] (G1) -- (B1);
        \draw[name path=lineBD1,line width=.3mm, dotted] (B1) -- (D1);
        \draw[name path=lineDF1,line width=.3mm, dotted] (D1) -- (F1);
        \draw[name path=lineFA1,line width=.3mm, dotted] (F1) -- (A1);

         \path [name intersections={of=lineGB1 and lineAC1,by=A2}];
             \node [fill=black,inner sep=1.5pt] at (A2) {};
         \path [name intersections={of=lineAC1 and lineBD1,by=B2}];
             \node [fill=black,inner sep=1.5pt] at (B2) {};
         \path [name intersections={of=lineBD1 and lineCE1,by=C2}];
             \node [fill=black,inner sep=1.5pt] at (C2) {};
         \path [name intersections={of=lineCE1 and lineDF1,by=D2}];
             \node [fill=black,inner sep=1.5pt] at (D2) {};
         \path [name intersections={of=lineDF1 and lineEG1,by=E2}];
             \node [fill=black,inner sep=1.5pt] at (E2) {};
         \path [name intersections={of=lineEG1 and lineFA1,by=F2}];
             \node [fill=black,inner sep=1.5pt] at (F2) {};
         \path [name intersections={of=lineFA1 and lineGB1,by=G2}];
             \node [fill=black,inner sep=1.5pt] at (G2) {};

         \draw[line width=.7mm] (A2) -- (B2) -- (C2) -- (D2) -- (E2) -- (F2) -- (G2) -- (A2);

        \draw[name path=lineAC2,line width=.2mm, dotted] (A2) -- (C2); 
        \draw[name path=lineCE2,line width=.2mm, dotted] (C2) -- (E2);
        \draw[name path=lineEG2,line width=.2mm, dotted] (E2) -- (G2);
        \draw[name path=lineGB2,line width=.2mm, dotted] (G2) -- (B2);
        \draw[name path=lineBD2,line width=.2mm, dotted] (B2) -- (D2);
        \draw[name path=lineDF2,line width=.2mm, dotted] (D2) -- (F2);
        \draw[name path=lineFA2,line width=.2mm, dotted] (F2) -- (A2);

         \path [name intersections={of=lineGB2 and lineAC2,by=A3}];
             \node [fill=black,inner sep=1.5pt] at (A3) {};
         \path [name intersections={of=lineAC2 and lineBD2,by=B3}];
             \node [fill=black,inner sep=1.5pt] at (B3) {};
         \path [name intersections={of=lineBD2 and lineCE2,by=C3}];
             \node [fill=black,inner sep=1.5pt] at (C3) {};
         \path [name intersections={of=lineCE2 and lineDF2,by=D3}];
             \node [fill=black,inner sep=1.5pt] at (D3) {};
         \path [name intersections={of=lineDF2 and lineEG2,by=E3}];
             \node [fill=black,inner sep=1.5pt] at (E3) {};
         \path [name intersections={of=lineEG2 and lineFA2,by=F3}];
             \node [fill=black,inner sep=1.5pt] at (F3) {};
         \path [name intersections={of=lineFA2 and lineGB2,by=G3}];
             \node [fill=black,inner sep=1.5pt] at (G3) {};

         \draw[line width=.6mm] (A3) -- (B3) -- (C3) -- (D3) -- (E3) -- (F3) -- (G3) -- (A3);

        \draw[name path=lineAC3,line width=.1mm, dotted] (A3) -- (C3); 
        \draw[name path=lineCE3,line width=.1mm, dotted] (C3) -- (E3);
        \draw[name path=lineEG3,line width=.1mm, dotted] (E3) -- (G3);
        \draw[name path=lineGB3,line width=.1mm, dotted] (G3) -- (B3);
        \draw[name path=lineBD3,line width=.1mm, dotted] (B3) -- (D3);
        \draw[name path=lineDF3,line width=.1mm, dotted] (D3) -- (F3);
        \draw[name path=lineFA3,line width=.1mm, dotted] (F3) -- (A3);

         \path [name intersections={of=lineGB3 and lineAC3,by=A4}];
             \node [fill=black,inner sep=1.5pt] at (A4) {};
         \path [name intersections={of=lineAC3 and lineBD3,by=B4}];
             \node [fill=black,inner sep=1.5pt] at (B4) {};
         \path [name intersections={of=lineBD3 and lineCE3,by=C4}];
             \node [fill=black,inner sep=1.5pt] at (C4) {};
         \path [name intersections={of=lineCE3 and lineDF3,by=D4}];
             \node [fill=black,inner sep=1.5pt] at (D4) {};
         \path [name intersections={of=lineDF3 and lineEG3,by=E4}];
             \node [fill=black,inner sep=1.5pt] at (E4) {};
         \path [name intersections={of=lineEG3 and lineFA3,by=F4}];
             \node [fill=black,inner sep=1.5pt] at (F4) {};
         \path [name intersections={of=lineFA3 and lineGB3,by=G4}];
             \node [fill=black,inner sep=1.5pt] at (G4) {};

         \draw[line width=.5mm] (A4) -- (B4) -- (C4) -- (D4) -- (E4) -- (F4) -- (G4) -- (A4);

        \draw[name path=lineAC4,line width=.05mm, dotted] (A4) -- (C4); 
        \draw[name path=lineCE4,line width=.05mm, dotted] (C4) -- (E4);
        \draw[name path=lineEG4,line width=.05mm, dotted] (E4) -- (G4);
        \draw[name path=lineGB4,line width=.05mm, dotted] (G4) -- (B4);
        \draw[name path=lineBD4,line width=.05mm, dotted] (B4) -- (D4);
        \draw[name path=lineDF4,line width=.05mm, dotted] (D4) -- (F4);
        \draw[name path=lineFA4,line width=.05mm, dotted] (F4) -- (A4);

         \path [name intersections={of=lineGB4 and lineAC4,by=A5}];
             \node [fill=black,inner sep=1pt] at (A5) {};
         \path [name intersections={of=lineAC4 and lineBD4,by=B5}];
             \node [fill=black,inner sep=1pt] at (B5) {};
         \path [name intersections={of=lineBD4 and lineCE4,by=C5}];
             \node [fill=black,inner sep=1pt] at (C5) {};
         \path [name intersections={of=lineCE4 and lineDF4,by=D5}];
             \node [fill=black,inner sep=1pt] at (D5) {};
         \path [name intersections={of=lineDF4 and lineEG4,by=E5}];
             \node [fill=black,inner sep=1pt] at (E5) {};
         \path [name intersections={of=lineEG4 and lineFA4,by=F5}];
             \node [fill=black,inner sep=1pt] at (F5) {};
         \path [name intersections={of=lineFA4 and lineGB4,by=G5}];
             \node [fill=black,inner sep=1pt] at (G5) {};

         \draw[line width=.4mm] (A5) -- (B5) -- (C5) -- (D5) -- (E5) -- (F5) -- (G5) -- (A5);

        \draw[name path=lineAC5,line width=.05mm, dotted] (A5) -- (C5); 
        \draw[name path=lineCE5,line width=.05mm, dotted] (C5) -- (E5);
        \draw[name path=lineEG5,line width=.05mm, dotted] (E5) -- (G5);
        \draw[name path=lineGB5,line width=.05mm, dotted] (G5) -- (B5);
        \draw[name path=lineBD5,line width=.05mm, dotted] (B5) -- (D5);
        \draw[name path=lineDF5,line width=.05mm, dotted] (D5) -- (F5);
        \draw[name path=lineFA5,line width=.05mm, dotted] (F5) -- (A5);

         \path [name intersections={of=lineGB5 and lineAC5,by=A6}];
             \node [fill=black,inner sep=0.7pt] at (A6) {};
         \path [name intersections={of=lineAC5 and lineBD5,by=B6}];
             \node [fill=black,inner sep=0.7pt] at (B6) {};
         \path [name intersections={of=lineBD5 and lineCE5,by=C6}];
             \node [fill=black,inner sep=0.7pt] at (C6) {};
         \path [name intersections={of=lineCE5 and lineDF5,by=D6}];
             \node [fill=black,inner sep=0.7pt] at (D6) {};
         \path [name intersections={of=lineDF5 and lineEG5,by=E6}];
             \node [fill=black,inner sep=0.7pt] at (E6) {};
         \path [name intersections={of=lineEG5 and lineFA5,by=F6}];
             \node [fill=black,inner sep=0.7pt] at (F6) {};
         \path [name intersections={of=lineFA5 and lineGB5,by=G6}];
             \node [fill=black,inner sep=0.7pt] at (G6) {};

         \draw[line width=.3mm] (A6) -- (B6) -- (C6) -- (D6) -- (E6) -- (F6) -- (G6) -- (A6);

        \draw[name path=lineAC6,line width=.06mm, dotted] (A6) -- (C6); 
        \draw[name path=lineCE6,line width=.06mm, dotted] (C6) -- (E6);
        \draw[name path=lineEG6,line width=.06mm, dotted] (E6) -- (G6);
        \draw[name path=lineGB6,line width=.06mm, dotted] (G6) -- (B6);
        \draw[name path=lineBD6,line width=.06mm, dotted] (B6) -- (D6);
        \draw[name path=lineDF6,line width=.06mm, dotted] (D6) -- (F6);
        \draw[name path=lineFA6,line width=.06mm, dotted] (F6) -- (A6);

         \path [name intersections={of=lineGB6 and lineAC6,by=A7}];
             \node [fill=black,inner sep=0.5pt] at (A7) {};
         \path [name intersections={of=lineAC6 and lineBD6,by=B7}];
             \node [fill=black,inner sep=0.5pt] at (B7) {};
         \path [name intersections={of=lineBD6 and lineCE6,by=C7}];
             \node [fill=black,inner sep=0.5pt] at (C7) {};
         \path [name intersections={of=lineCE6 and lineDF6,by=D7}];
             \node [fill=black,inner sep=0.5pt] at (D7) {};
         \path [name intersections={of=lineDF6 and lineEG6,by=E7}];
             \node [fill=black,inner sep=0.5pt] at (E7) {};
         \path [name intersections={of=lineEG6 and lineFA6,by=F7}];
             \node [fill=black,inner sep=0.5pt] at (F7) {};
         \path [name intersections={of=lineFA6 and lineGB6,by=G7}];
             \node [fill=black,inner sep=0.5pt] at (G7) {};

         \draw[line width=.2mm] (A7) -- (B7) -- (C7) -- (D7) -- (E7) -- (F7) -- (G7) -- (A7);

        \draw[name path=lineAC7,line width=.05mm, dotted] (A7) -- (C7); 
        \draw[name path=lineCE7,line width=.05mm, dotted] (C7) -- (E7);
        \draw[name path=lineEG7,line width=.05mm, dotted] (E7) -- (G7);
        \draw[name path=lineGB7,line width=.05mm, dotted] (G7) -- (B7);
        \draw[name path=lineBD7,line width=.05mm, dotted] (B7) -- (D7);
        \draw[name path=lineDF7,line width=.05mm, dotted] (D7) -- (F7);
        \draw[name path=lineFA7,line width=.05mm, dotted] (F7) -- (A7);

         \path [name intersections={of=lineGB7 and lineAC7,by=A8}];
         \path [name intersections={of=lineAC7 and lineBD7,by=B8}];
         \path [name intersections={of=lineBD7 and lineCE7,by=C8}];
         \path [name intersections={of=lineCE7 and lineDF7,by=D8}];
         \path [name intersections={of=lineDF7 and lineEG7,by=E8}];
         \path [name intersections={of=lineEG7 and lineFA7,by=F8}];
         \path [name intersections={of=lineFA7 and lineGB7,by=G8}];

         \draw[line width=.1mm] (A8) -- (B8) -- (C8) -- (D8) -- (E8) -- (F8) -- (G8) -- (A8);

        \draw[name path=lineAC8,line width=.04mm, dotted] (A8) -- (C8); 
        \draw[name path=lineCE8,line width=.04mm, dotted] (C8) -- (E8);
        \draw[name path=lineEG8,line width=.04mm, dotted] (E8) -- (G8);
        \draw[name path=lineGB8,line width=.04mm, dotted] (G8) -- (B8);
        \draw[name path=lineBD8,line width=.04mm, dotted] (B8) -- (D8);
        \draw[name path=lineDF8,line width=.04mm, dotted] (D8) -- (F8);
        \draw[name path=lineFA8,line width=.04mm, dotted] (F8) -- (A8);

         \path [name intersections={of=lineGB8 and lineAC8,by=A9}];
         \path [name intersections={of=lineAC8 and lineBD8,by=B9}];
         \path [name intersections={of=lineBD8 and lineCE8,by=C9}];
         \path [name intersections={of=lineCE8 and lineDF8,by=D9}];
         \path [name intersections={of=lineDF8 and lineEG8,by=E9}];
         \path [name intersections={of=lineEG8 and lineFA8,by=F9}];
         \path [name intersections={of=lineFA8 and lineGB8,by=G9}];

         \draw[line width=.1mm] (A9) -- (B9) -- (C9) -- (D9) -- (E9) -- (F9) -- (G9) -- (A9);

        \draw[name path=lineAC9,line width=.03mm, dotted] (A9) -- (C9); 
        \draw[name path=lineCE9,line width=.03mm, dotted] (C9) -- (E9);
        \draw[name path=lineEG9,line width=.03mm, dotted] (E9) -- (G9);
        \draw[name path=lineGB9,line width=.03mm, dotted] (G9) -- (B9);
        \draw[name path=lineBD9,line width=.03mm, dotted] (B9) -- (D9);
        \draw[name path=lineDF9,line width=.03mm, dotted] (D9) -- (F9);
        \draw[name path=lineFA9,line width=.03mm, dotted] (F9) -- (A9);

         \path [name intersections={of=lineGB9 and lineAC9,by=A10}];
         \path [name intersections={of=lineAC9 and lineBD9,by=B10}];
         \path [name intersections={of=lineBD9 and lineCE9,by=C10}];
         \path [name intersections={of=lineCE9 and lineDF9,by=D10}];
         \path [name intersections={of=lineDF9 and lineEG9,by=E10}];
         \path [name intersections={of=lineEG9 and lineFA9,by=F10}];
         \path [name intersections={of=lineFA9 and lineGB9,by=G10}];

         \draw[line width=.08mm] (A10) -- (B10) -- (C10) -- (D10) -- (E10) -- (F10) -- (G10) -- (A10);

        \draw[name path=lineAC10,line width=.01mm, dotted] (A10) -- (C10); 
        \draw[name path=lineCE10,line width=.01mm, dotted] (C10) -- (E10);
        \draw[name path=lineEG10,line width=.01mm, dotted] (E10) -- (G10);
        \draw[name path=lineGB10,line width=.01mm, dotted] (G10) -- (B10);
        \draw[name path=lineBD10,line width=.01mm, dotted] (B10) -- (D10);
        \draw[name path=lineDF10,line width=.01mm, dotted] (D10) -- (F10);
        \draw[name path=lineFA10,line width=.01mm, dotted] (F10) -- (A10);

         \path [name intersections={of=lineGB10 and lineAC10,by=A11}];
         \path [name intersections={of=lineAC10 and lineBD10,by=B11}];
         \path [name intersections={of=lineBD10 and lineCE10,by=C11}];
         \path [name intersections={of=lineCE10 and lineDF10,by=D11}];
         \path [name intersections={of=lineDF10 and lineEG10,by=E11}];
         \path [name intersections={of=lineEG10 and lineFA10,by=F11}];
         \path [name intersections={of=lineFA10 and lineGB10,by=G11}];

         \draw[line width=.06mm] (A11) -- (B11) -- (C11) -- (D11) -- (E11) -- (F11) -- (G11) -- (A11);
         
        \draw[name path=lineAC11,line width=.01mm, dotted] (A11) -- (C11); 
        \draw[name path=lineCE11,line width=.01mm, dotted] (C11) -- (E11);
        \draw[name path=lineEG11,line width=.01mm, dotted] (E11) -- (G11);
        \draw[name path=lineGB11,line width=.01mm, dotted] (G11) -- (B11);
        \draw[name path=lineBD11,line width=.01mm, dotted] (B11) -- (D11);
        \draw[name path=lineDF11,line width=.01mm, dotted] (D11) -- (F11);
        \draw[name path=lineFA11,line width=.01mm, dotted] (F11) -- (A11);

         \path [name intersections={of=lineGB11 and lineAC11,by=A12}];
         \path [name intersections={of=lineAC11 and lineBD11,by=B12}];
         \path [name intersections={of=lineBD11 and lineCE11,by=C12}];
         \path [name intersections={of=lineCE11 and lineDF11,by=D12}];
         \path [name intersections={of=lineDF11 and lineEG11,by=E12}];
         \path [name intersections={of=lineEG11 and lineFA11,by=F12}];
         \path [name intersections={of=lineFA11 and lineGB11,by=G12}];

         \draw[line width=.05mm] (A11) -- (B11) -- (C11) -- (D11) -- (E11) -- (F11) -- (G11) -- (A11);

        \draw[name path=lineAC12,line width=.01mm, dotted] (A12) -- (C12); 
        \draw[name path=lineCE12,line width=.01mm, dotted] (C12) -- (E12);
        \draw[name path=lineEG12,line width=.01mm, dotted] (E12) -- (G12);
        \draw[name path=lineGB12,line width=.01mm, dotted] (G12) -- (B12);
        \draw[name path=lineBD12,line width=.01mm, dotted] (B12) -- (D12);
        \draw[name path=lineDF12,line width=.01mm, dotted] (D12) -- (F12);
        \draw[name path=lineFA12,line width=.01mm, dotted] (F12) -- (A12);

         \path [name intersections={of=lineGB12 and lineAC12,by=A13}];
         \path [name intersections={of=lineAC12 and lineBD12,by=B13}];
         \path [name intersections={of=lineBD12 and lineCE12,by=C13}];
         \path [name intersections={of=lineCE12 and lineDF12,by=D13}];
         \path [name intersections={of=lineDF12 and lineEG12,by=E13}];
         \path [name intersections={of=lineEG12 and lineFA12,by=F13}];
         \path [name intersections={of=lineFA12 and lineGB12,by=G13}];

         \draw[line width=.04mm] (A12) -- (B12) -- (C12) -- (D12) -- (E12) -- (F12) -- (G12) -- (A12);

        \draw[name path=lineAC13,line width=.01mm, dotted] (A13) -- (C13); 
        \draw[name path=lineCE13,line width=.01mm, dotted] (C13) -- (E13);
        \draw[name path=lineEG13,line width=.01mm, dotted] (E13) -- (G13);
        \draw[name path=lineGB13,line width=.01mm, dotted] (G13) -- (B13);
        \draw[name path=lineBD13,line width=.01mm, dotted] (B13) -- (D13);
        \draw[name path=lineDF13,line width=.01mm, dotted] (D13) -- (F13);
        \draw[name path=lineFA13,line width=.01mm, dotted] (F13) -- (A13);

         \path [name intersections={of=lineGB13 and lineAC13,by=A14}];
         \path [name intersections={of=lineAC13 and lineBD13,by=B14}];
         \path [name intersections={of=lineBD13 and lineCE13,by=C14}];
         \path [name intersections={of=lineCE13 and lineDF13,by=D14}];
         \path [name intersections={of=lineDF13 and lineEG13,by=E14}];
         \path [name intersections={of=lineEG13 and lineFA13,by=F14}];
         \path [name intersections={of=lineFA13 and lineGB13,by=G14}];

         \draw[line width=.03mm] (A13) -- (B13) -- (C13) -- (D13) -- (E13) -- (F13) -- (G13) -- (A13);

        \draw[name path=lineAC14,line width=.01mm, dotted] (A14) -- (C14); 
        \draw[name path=lineCE14,line width=.01mm, dotted] (C14) -- (E14);
        \draw[name path=lineEG14,line width=.01mm, dotted] (E14) -- (G14);
        \draw[name path=lineGB14,line width=.01mm, dotted] (G14) -- (B14);
        \draw[name path=lineBD14,line width=.01mm, dotted] (B14) -- (D14);
        \draw[name path=lineDF14,line width=.01mm, dotted] (D14) -- (F14);
        \draw[name path=lineFA14,line width=.01mm, dotted] (F14) -- (A14);

         \path [name intersections={of=lineGB14 and lineAC14,by=A15}];
         \path [name intersections={of=lineAC14 and lineBD14,by=B15}];
         \path [name intersections={of=lineBD14 and lineCE14,by=C15}];
         \path [name intersections={of=lineCE14 and lineDF14,by=D15}];
         \path [name intersections={of=lineDF14 and lineEG14,by=E15}];
         \path [name intersections={of=lineEG14 and lineFA14,by=F15}];
         \path [name intersections={of=lineFA14 and lineGB14,by=G15}];

         \draw[line width=.03mm] (A14) -- (B14) -- (C14) -- (D14) -- (E14) -- (F14) -- (G14) -- (A14);

        \draw[name path=lineAC15,line width=.01mm, dotted] (A15) -- (C15); 
        \draw[name path=lineCE15,line width=.01mm, dotted] (C15) -- (E15);
        \draw[name path=lineEG15,line width=.01mm, dotted] (E15) -- (G15);
        \draw[name path=lineGB15,line width=.01mm, dotted] (G15) -- (B15);
        \draw[name path=lineBD15,line width=.01mm, dotted] (B15) -- (D15);
        \draw[name path=lineDF15,line width=.01mm, dotted] (D15) -- (F15);
        \draw[name path=lineFA15,line width=.01mm, dotted] (F15) -- (A15);

         \path [name intersections={of=lineGB15 and lineAC15,by=A16}];
         \path [name intersections={of=lineAC15 and lineBD15,by=B16}];
         \path [name intersections={of=lineBD15 and lineCE15,by=C16}];
         \path [name intersections={of=lineCE15 and lineDF15,by=D16}];
         \path [name intersections={of=lineDF15 and lineEG15,by=E16}];
         \path [name intersections={of=lineEG15 and lineFA15,by=F16}];
         \path [name intersections={of=lineFA15 and lineGB15,by=G16}];

         \draw[line width=.03mm] (A15) -- (B15) -- (C15) -- (D15) -- (E15) -- (F15) -- (G15) -- (A15);

    \end{tikzpicture}
\end{center}
\caption{The orbit of a convex polygon under the pentagram map converges to a point.}\label{Fig2}
\end{figure}

Specifically, suppose we are given an $n$-gon $P$ in the projectivization $\P\V$ of a $3$-dimensional vector space $\V$. Lift the vertices of the polygon to vectors $V_i \in \V$, $i = 1, \dots, n$. Define an operator $G_P \colon \V \to \V$ by the formula
\begin{align}\label{eq:glick}
G_p(V) := n V - \sum_{i=1}^n \frac{V_{i-1}\wedge V \wedge V_{i+1}}{V_{i-1}\wedge V_i \wedge V_{i+1}}\,V_i,
\end{align}
where all indices are understood modulo $n$. Note that this operator does not change under a rescaling of $V_i$'s and hence depends only on the polygon $P$. What Glick proved is that the limit point of successive images of $P$ under the pentagram map is one of the eigenvectors of $G_P$ (equivalently, a fixed point of the associated projective mapping $\P\V \to \P\V$).\par


We believe that the significance of Glick's operator actually goes beyond the limit point. In particular, as was observed by Glick himself, the operator $G_P$ has a natural geometric meaning for both pentagons and hexagons. Namely, by Clebsch's theorem every pentagon is projectively equivalent to its pentagram map image, and it turns out that the corresponding projective transformation is given by $G_P - 3I$, where $I$ is the identity matrix. Indeed, consider e.g. the first vertex of the pentagon and its lift $V_1$. Then the above formula gives
$$
(G_P - 3\Id)(V_1) = V_1 - \frac{V_{2}\wedge V_1 \wedge V_{4}}{V_{2}\wedge V_3 \wedge V_{4}}\,V_3 - \frac{V_{3}\wedge V_1 \wedge V_{5}}{V_{3}\wedge V_4 \wedge V_{5}}\,V_4.
$$
Taking the wedge product of this expression with $V_2 \wedge V_4$ or $V_3 \wedge V_5$ we get zero. This means that $$(G_P - 3\Id)(V_1) \in \mathrm{span}(V_2, V_4) \cap \mathrm{span}(V_3, V_5),$$ so the corresponding point in the projective plane is the intersection of diagonals of the pentagon. Furthermore, since Glick's operator is invariant under cyclic permutations, the same holds for all vertices, meaning that the operator $G_P - 3\Id$ indeed takes a pentagon to its pentagram map image.\par

Likewise, the second iterate of the pentagram map on hexagons also leads to an equivalent hexagon, and the equivalence is again realized by $G_P - 3I$. Finally, notice that for quadrilaterals  $G_P - 2I$ is a constant map onto the intersection of diagonals. These observations make us believe that the operator $G_P$ is per se an important object in projective geometry, whose full significance is yet to be understood.
\par
In the present paper we show that Glick's operator $G_P$ can be interpreted as \textit{infinitesimal monodromy}. To define the latter, consider the space of \textit{twisted polygons}, that are polygons closed up to a projective transformation, known as the \textit{monodromy}. Any closed polygon can be viewed as a twisted one, with trivial monodromy. To define the infinitesimal monodromy we deform a closed polygon into a genuine twisted one. To construct such a deformation, we use what is known as the \textit{scaling symmetry}. The scaling symmetry is a $1$-parametric group of transformations of twisted polygons which commutes with the pentagram map. That symmetry was instrumental for the proof of complete integrability of the pentagram map~\cite{ovsienko2010pentagram}. \par
Applying the scaling symmetry to a given closed polygon $P$ we get a family $P_z$ of polygons depending on a real parameter $z$ and such that $P_1 = P$. Thus, the monodromy $M_z$ of $P_z$ is a projective transformation depending on $z$ which is the identity for $z = 1$. By definition, the infinitesimal monodromy of $P$ is the derivative $dM_z/dz$ at $z = 1$. This makes the infinitesimal monodromy an element of the Lie algebra of the projective group $\PGL(\P^2)$, i.e. a linear operator on $\R^3$ defined up to adding a scalar matrix. The following is our main result.

\begin{theorem}\label{thm1}
The infinitesimal monodromy of a closed polygon $P$ coincides with Glick's operator $G_P$, up to addition of a scalar matrix.
\end{theorem}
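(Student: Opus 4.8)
The plan is to pass to the standard lift of $P$ to $\V$ together with its third–order difference equation, realize the scaling symmetry as a one–parameter deformation of that equation, and then differentiate the monodromy of the deformed equation at $z=1$. Concretely, I would lift the vertices to $V_i\in\V$ normalized by $V_{i-1}\wedge V_i\wedge V_{i+1}=1$ for all $i$; then $(V_i)$ satisfies $V_{i+3}=a_iV_{i+2}+b_iV_{i+1}+V_i$ with $n$–periodic coefficients, and since $P$ is closed the lift may be taken with $V_{i+n}=V_i$. The scaling symmetry deforms this to $V_{i+3}(z)=a_i(z)V_{i+2}(z)+b_i(z)V_{i+1}(z)+V_i(z)$ with $a_i(1)=a_i$, $b_i(1)=b_i$ (the constant term stays $1$ automatically, as the lift of $P_z$ is again $\wedge$–normalized), and with infinitesimal generator $\dot a_i=c\,a_i$, $\dot b_i=-c\,b_i$, where $c$ is a normalization constant of the scaling parameter; a short computation — e.g.\ on a quadrilateral, where $a_i\equiv1$ and $b_i\equiv-1$ — pins down $c=\tfrac13$ in the standard normalization. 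Writing $\Phi_i(z)=[\,V_i(z)\mid V_{i+1}(z)\mid V_{i+2}(z)\,]$ and $\Phi_i:=\Phi_i(1)$, the recurrence reads $\Phi_{i+1}(z)=\Phi_i(z)B_i(z)$ with $B_i(z)$ the companion matrix of the $i$–th step (last column $(1,b_i(z),a_i(z))^{T}$), and the monodromy is $M_z=\Phi_1(z)\,B_1(z)\cdots B_n(z)\,\Phi_1(z)^{-1}$, with $M_1=I$.

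Next I would differentiate. Since $M_1=I$, one has $\dot M=\Phi_1\dot{\mathcal B}\Phi_1^{-1}$ where $\mathcal B(z)=B_1(z)\cdots B_n(z)$ (the terms from differentiating the conjugating factors $\Phi_1(z)^{\pm1}$ cancel). Differentiating the ordered product and inserting $B_1\cdots B_{k-1}=\Phi_1^{-1}\Phi_k$ and $B_{k+1}\cdots B_n=\Phi_{k+1}^{-1}\Phi_{n+1}=\Phi_{k+1}^{-1}\Phi_1$ (all at $z=1$; the last equality uses closedness) gives
$$\dot M=\sum_{k=1}^n\Phi_k\,\dot B_k\,\Phi_{k+1}^{-1}.$$
Now $\dot B_k$ has only its last column, equal to $(0,\dot b_k,\dot a_k)^{T}$, nonzero, so $\Phi_k\dot B_k=(\dot a_kV_{k+2}+\dot b_kV_{k+1})\,e_3^{T}$; and since $\det\Phi_{k+1}=1$, the last row of $\Phi_{k+1}^{-1}$ is the covector $W\mapsto V_{k+1}\wedge V_{k+2}\wedge W$. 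Hence each summand is a rank–one operator, and inserting $\dot a_k=\tfrac13a_k$, $\dot b_k=-\tfrac13b_k$ yields
$$\dot M(W)=\tfrac13\sum_{k=1}^n(V_{k+1}\wedge V_{k+2}\wedge W)\,(a_kV_{k+2}-b_kV_{k+1}).$$

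It then remains to identify the right–hand side with $G_P$. For each $j$ I would expand $W=\alpha_jV_{j-1}+\beta_jV_j+\gamma_jV_{j+1}$; Cramer's rule gives $\alpha_j=V_j\wedge V_{j+1}\wedge W$, $\beta_j=V_{j-1}\wedge W\wedge V_{j+1}$, $\gamma_j=V_{j-1}\wedge V_j\wedge W$, whence $G_P(W)=nW-\sum_j\beta_jV_j$ and the ``partition of unity'' $\sum_j(\alpha_jV_{j-1}+\beta_jV_j+\gamma_jV_{j+1})=nW$. Comparing the expansions of $W$ in two adjacent frames through $V_{j+2}=a_{j-1}V_{j+1}+b_{j-1}V_j+V_{j-1}$ yields $\gamma_j\,a_{j-2}=\gamma_{j-1}-\beta_j$ and $\alpha_j\,b_{j-1}=\beta_j-\alpha_{j+1}$. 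Reindexing the two sums above by $j=k+2$ and $j=k+1$ respectively and applying these identities, the first becomes $\sum_j\gamma_jV_{j+1}-\sum_j\beta_jV_j$ and the second $\sum_j\beta_jV_j-\sum_j\alpha_jV_{j-1}$, so
$$\dot M(W)=\tfrac13\Bigl(\sum_j\gamma_jV_{j+1}+\sum_j\alpha_jV_{j-1}-2\sum_j\beta_jV_j\Bigr)=\tfrac13\Bigl(nW-3\sum_j\beta_jV_j\Bigr)=G_P(W)-\tfrac{2n}{3}\,W.$$
Therefore $\dot M=G_P-\tfrac{2n}{3}I$, which proves the theorem.

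I expect the substantive part to be the last step: carrying the cyclic reindexing, the three coefficient families $\alpha_j,\beta_j,\gamma_j$, and the two determinant identities through so that the sum telescopes exactly into $G_P$ plus a scalar matrix. A secondary point that needs care is fixing the infinitesimal form of the scaling symmetry in the $V_{i-1}\wedge V_i\wedge V_{i+1}=1$ normalization (the constant $c=\tfrac13$), since it is precisely this normalization that makes $\dot M$ equal to $G_P$ itself modulo scalars rather than a nonzero multiple of it.
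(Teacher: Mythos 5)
Your proposal is correct and follows essentially the same route as the paper: lift the scaling symmetry to a one-parameter deformation of the third-order difference equation satisfied by the $V_i$, write the monodromy as an ordered product of companion matrices conjugated by a moving frame, differentiate at $z=1$ (using $M_1=\Id$ to kill the conjugation terms) to obtain a sum of rank-one operators $\Phi_k\dot B_k\Phi_{k+1}^{-1}$, and telescope via determinant identities coming from the recurrence. The only real difference is the choice of lift of the scaling: you work with the determinant-normalized lift and the symmetric deformation $a_i\mapsto z^{1/3}a_i$, $b_i\mapsto z^{-1/3}b_i$, whereas the paper keeps an arbitrary lift and deforms $b_i,c_i\mapsto z^{-1}b_i,\,z^{-1}c_i$; both are legitimate by the independence of the infinitesimal monodromy from the lift, and your final telescoping (the identities $a_{j-2}\gamma_j=\gamma_{j-1}-\beta_j$, $b_{j-1}\alpha_j=\beta_j-\alpha_{j+1}$ and the partition of unity) checks out, giving $\dot M=G_P-\tfrac{2n}{3}\Id$. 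The one step you should not leave to a ``quadrilateral consistency check'' is the identification of the deformation with the scaling symmetry: that scaling acts diagonally on $(a_i,b_i)$ at all --- not merely the value $c=\tfrac13$ --- requires the expressions $x_{i+2}=a_i/(b_ib_{i+1})$, $y_{i+2}=-b_{i+1}/(a_ia_{i+1})$ for the corner invariants of the normalized lift (this is precisely how the paper justifies its own deformation, via Lemma 4.5 of Ovsienko--Schwartz--Tabachnikov), and getting the constant right is essential since a wrong $c$ would produce a nonzero multiple of $G_P$ rather than $G_P$ itself modulo scalars.
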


This result provides another perspective on the limit point. Namely, observe that for $z \approx 1$ the monodromy $M_z$ of the deformed polygon is given by $$M_z \approx I + (z-1)(G_P + \lambda I),$$ up to higher order terms. Thus, the eigenvectors of $G_P$, and in particular the limit point, coincide with limiting positions of eigenvectors of $M_z$ as $z \to 1$. At least one of the eigenvectors of $M_z$ has a geometric meaning. Namely, the deformed polygon $P(z)$ can be thought of as a spiral, and the center of that spiral must be an eigenvector of the monodromy. We believe that as $z \to 1$ that eigenvector converges to the limiting point of the pentagram map (and not to one of the two other eigenvectors). If this is true, then we have the following picture. The scaling symmetry turns a closed polygon into a spiral. As the scaling parameter $z$ goes to $1$, the spiral approaches the initial polygon, while its center approaches the limit point of the pentagram map, see Figure \ref{Fig:scaling}.
\begin{figure}[h]
\centering
 \includegraphics[scale = 0.35]{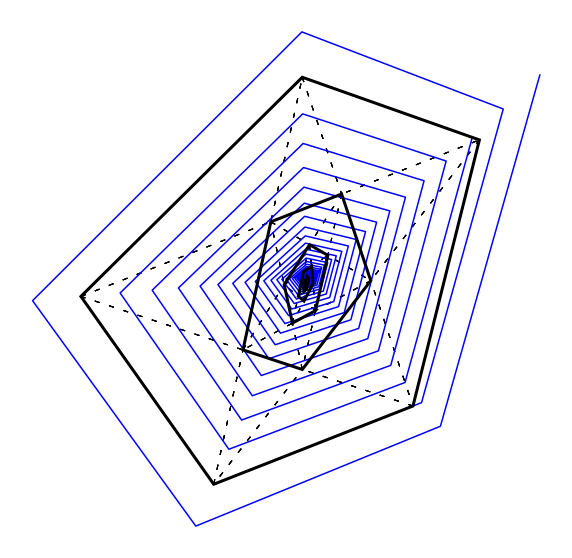}
    \caption{The image of a closed polygon under a scaling transformation is a spiral. As the scaling parameter goes to $1$, the center of the spiral approaches the limit point of the pentagram map. }\label{Fig:scaling}
    \end{figure}


 We note that the scaling symmetry is actually only defined on projective equivalence classes of polygons as opposed to actual polygons. This makes the family of polygons $P_z$ we used to define the infinitesimal monodromy non-unique. After reviewing basic notions in Section \ref{Sec:back}, we show in Section \ref{Sec:im} that the infinitesimal monodromy does not depend on the family used to define it. The proof of Theorem \ref{thm1} is given in Section \ref{Sec:proof}.
 
 We end the introduction by mentioning a possible future direction. The notion of infinitesimal monodromy is well-defined for polygons in any dimension and any scaling operation. For multidimensional polygons, there are different possible scalings, corresponding to different integrable generalizations of the pentagram map \cite{khesin2013, khesin2016}. It would be interesting to investigate the infinitesimal monodromy in those cases, along with its possible relation to the limit point of the corresponding pentagram maps. As for now, it is not even known if such a limit point exists for any class of  multidimensional polygons satisfying a convexity-type condition.
 \par It also seems that the infinitesimal monodromy in $\P^1$ is related to so-called cross-ratio dynamics, see \cite[Section 6.2.1]{arnold2018cross}.

\par
\bigskip

{\bf Acknowledgments.} The authors are grateful to Boris Khesin, Valentin Ovsienko, Richard Schwartz, and Sergei Tabachnikov for comments and discussions, as well to anonymous referees for their suggestions. A.I. was supported by NSF grant DMS-2008021.

\section{Background: twisted polygons, corner invariants, and scaling}\label{Sec:back}

In this section we briefly recall standard notions related to the pentagram map, concentrating on what will be used in the sequel.

A {\it twisted $n$-gon} is a bi-infinite sequence of points $v_i \in \P^2$ such that $v_{i+n}=M(v_i)$ for all $i\in \mathbb{Z}$ and a certain projective transformation $M\in \PGL(\P^2)$ called the {\it monodromy}. A twisted $n$-gon generalizes the notion of a closed $n$-gon as we recover a closed $n$-gon when the monodromy is equal to the identity. We denote the space of twisted $n$-gons by $\Poly_n$.

The pentagram map takes a twisted $n$-gon to a twisted $n$-gon (preserving the monodromy) so it can be regarded as a densely defined map from the space $\Poly_n$ of twisted $n$-gons to itself. From now on, we will assume that polygons are in sufficiently general position so as to allow for all constructions to go through unhindered. 

We say that two twisted $n$-gons $\{v_i\}$ and $\{v_i'\}$ are projectively equivalent when there is a projective transformation $\Phi$ such that $\Phi (v_i)=v_i'$. Notice, if two twisted $n$-gons are projectively equivalent, then their monodromies $M,M'$ are related by $M'=\Phi \circ M \circ \Phi^{-1}$. 




The pentagram map on twisted $n$-gons commutes with projective transformations
and as such descends to a map on the space $\Poly_n \,/\, \PGL(\P^2)$ of projective equivalence classes of twisted $n$-gons. 


We now recall a construction of coordinates on the space $\Poly_n \,/\, \PGL(\P^2)$ of projective equivalence classes of twisted $n$-gons. These coordinates are known as \textit{corner invariants} and were introduced in \cite{schwartz2008discrete}.



 Let $\{v_i \in \P^2\}$ be a twisted polygon. Then the corner invariants $x_i,y_i$ of the vertex $v_i$ are defined as follows.
\begin{align}
\begin{split}
    x_i &:=\Big[
        v_{i-2},v_{i-1},\big((v_{i-2},v_{i-1})\cap (v_i,v_{i+1})\big), \big((v_{i-2},v_{i-1})\cap (v_{i+1},v_{i+2})\big)
        \Big], \\
    y_i &:= \Big[
        \big( (v_{i-2},v_{i-1})\cap (v_{i+1},v_{i+2})\big), \big((v_{i-1},v_i)\cap (v_{i+1},v_{i+2}) \big), v_{i+1}, v_{i+2}
    \Big],
\end{split} \label{eq2}
\end{align}
where we define the cross-ratio $[a,b,c,d]$ of $4$ points $a,b,c,d$ on a projective line as
\begin{align}
  [a,b,c,d]:=\frac{(a-b)(c-d)}{(a-c)(b-d)}. \label{eq1}
\end{align}
Consider Figure \ref{Fig:CI}. The value of $x_i$ is the cross ratio of the four points drawn on the line $(v_{i-2},v_{i-1})$ (i.e. the line on the left) and $y_i$ is the cross ratio of the four points drawn on the line $(v_{i+1},v_{i+2})$ (i.e. the line on the right).

These corner invariants are defined on almost the entire space $\Poly_n$ of twisted $n$-gons.
Furthermore, these numbers are invariant under projective transformations and hence descend to the space $\Poly_n \,/\, \PGL(\P^2)$ of projective equivalence classes of twisted polygons.
As shown in \cite{schwartz2008discrete}, the functions $x_1, \dots, x_n, y_1, \dots, y_n$ constitute a coordinate system on an open dense subset of $\Poly_n \,/\, \PGL(\P^2)$. This in particular allows one to express the pentagram map, viewed as a transformation of $\Poly_n \,/\, \PGL(\P^2)$, in terms of the corner invariants.
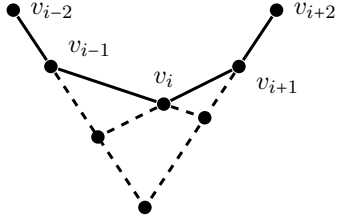
\begin{figure}[t]

\begin{center}
    \begin{tikzpicture}[thick,scale=0.5, mycirc/.style={circle,fill=black, inner sep=2pt},
    every node/.style={scale=0.9}]
        \node[mycirc, label=0:$v_{i-2}$] (v1) at (-4,2.5) {};
        \node[mycirc, label=10:$v_{i-1}$] (v2) at (-3,1) {};
        \node[mycirc, label=90:$v_{i}$] (v3) at (0,0) {};
        \node[mycirc, label=-10:$v_{i+1}$] (v4) at (2,1) {};
        \node[mycirc, label=0:$v_{i+2}$] (v5) at (3,2.5) {};
        
         \draw [name path=L12,line width=.4mm] (v1)--(v2); 
         \draw [name path=L54,line width=.4mm] (v5)--(v4); 
         \draw[name path=L43,line width=.4mm] (v4) -- (v3); 
         \draw[name path=L23,line width=.4mm] (v2) -- (v3); 
         
        \node[mycirc] (i1) at (-.5,-2.75) {};
        \node[mycirc] (i2) at (-1.75,-.875) {};
        \node[mycirc] (i3) at (1.0909,-.363636) {};
        
        \draw[name path=L2i1, line width=.4mm, dashed] (v2) -- (i1); 
        \draw[name path=L2i1, line width=.4mm, dashed] (v4) -- (i1);
        
        \draw[name path=L3i2, line width=.4mm, dashed] (v3) -- (i2); 
        \draw[name path=L3i3, line width=.4mm, dashed] (v3) -- (i3);
        
    \end{tikzpicture}

\end{center}
\caption{Definition of corner invariants.}\label{Fig:CI}
\end{figure}


If we are given a twisted $n$-gon with corner invariants $(x_i,y_i)$, then the corner invariants $(x_i,y_i)$ of its image under the pentagram are given by 
\begin{align}\label{eq3}
    x'_i=x_i\frac{1-x_{i-1} y_{i-1}}{1-x_{i+1} y_{i+1}} \qquad
    y'_i=y_{i+1} \frac{1-x_{i+2}y_{i+2}}{1-x_iy_i}. 
\end{align}
These formulas assume a specific labeling of vertices of the pentagram map image. For a different labeling the resulting formulas differ by a shift in indices. The choice of labeling, and more generally, the specific form of the above formulas will be of no importance to us. We will only use the following corollary.
Consider a $1$-parametric group of densely defined transformations $\Poly_n \,/\, \PGL(\P^2) \to \Poly_n \,/\, \PGL(\P^2)$ given by
\begin{align}\label{eq: scaling}
{R}_z\colon(x_i,y_i)\mapsto (x_i z, y_iz^{-1})
\end{align}
These transformations are known as \textit{scaling symmetries}.
\begin{proposition}
The scaling symmetry $R_z \colon \Poly_n \,/\, \PGL(\P^2) \to \Poly_n \,/\, \PGL(\P^2)$ on projective equivalence classes of twisted polygons commutes with the pentagram map for any $z \neq 0$.
\end{proposition}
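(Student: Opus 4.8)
The plan is to work entirely in the corner-invariant coordinates $(x_i,y_i)$, using the explicit transformation rules \eqref{eq3} for the pentagram map together with the definition \eqref{eq: scaling} of $R_z$. The single observation that makes everything go through is that each product $x_iy_i$ is invariant under scaling: $R_z$ sends $x_iy_i$ to $(zx_i)(z^{-1}y_i)=x_iy_i$. Consequently every factor of the form $1-x_jy_j$ occurring in \eqref{eq3} is left unchanged when we precompose with $R_z$.

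First I would compute the composition ``scale, then apply the pentagram map''. Starting from a twisted polygon with corner invariants $(x_i,y_i)$, after $R_z$ the coordinates are $(zx_i,\,z^{-1}y_i)$. Substituting these into \eqref{eq3} and using that the products $x_jy_j$ are unaffected, the new $x$-coordinate is $zx_i\cdot\frac{1-x_{i-1}y_{i-1}}{1-x_{i+1}y_{i+1}}=z\,x_i'$ and the new $y$-coordinate is $z^{-1}y_{i+1}\cdot\frac{1-x_{i+2}y_{i+2}}{1-x_iy_i}=z^{-1}y_i'$, where $(x_i',y_i')$ are the corner invariants of the pentagram image of $(x_i,y_i)$. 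This is exactly $R_z$ applied to $(x_i',y_i')$, i.e.\ the composition ``apply the pentagram map, then scale''. Hence the two compositions agree wherever both are defined, which is the assertion of the proposition.

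Two minor points should be addressed to make this airtight. First, \eqref{eq3} is written for one particular labeling of the vertices of the image polygon, and another choice shifts all indices by a constant; but $R_z$ acts in the same way ($x\mapsto zx$, $y\mapsto z^{-1}y$) at every index, so it commutes with any such index shift and the argument is unaffected. Second, both the pentagram map and $R_z$ are only densely defined on $\Poly_n\,/\,\PGL(\P^2)$; since $R_z$ is a diffeomorphism of the open set on which the coordinates $(x_i,y_i)$ are valid, equality of the two compositions on their common dense domain of definition is exactly what is claimed, and no further care is needed. I do not expect any genuine obstacle: the proof is a one-line substitution once one notices that scaling preserves each $x_iy_i$; the only thing to be careful about is the bookkeeping that confirms the factors $1-x_{i\pm1}y_{i\pm1}$, $1-x_iy_i$, $1-x_{i+2}y_{i+2}$ appearing in \eqref{eq3} are precisely the scaling-invariant ones, so that no stray power of $z$ survives.
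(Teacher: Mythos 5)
Your proof is correct and is exactly the paper's argument, spelled out in more detail: the paper's one-line proof is precisely the observation that the formulas \eqref{eq3} are unchanged when all $x$'s are multiplied by $z$ and all $y$'s by $z^{-1}$, which follows from the invariance of the products $x_iy_i$ that you identify. Your remarks on labeling and dense definition are sensible but not needed beyond what the paper already notes.
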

\begin{proof}
The above formulas for the pentagram map in $x,y$ coordinates remain unchanged if all $x$ variables are multiplied by $z$ and all $y$ variables by are multiplied by $z^{-1}$.
\end{proof}
This proposition was a key tool in the proof of integrability of the pentagram map. Namely, consider a (twisted or closed) polygon $P$ defined up to a projective transformation, and let $P_z$ be its image under the scaling symmetry. Then, since the pentagram map commutes with scaling and preserves the monodromy, it follows that the monodromy $M_z$ of $P_z$ (which does not have to be the identity even if the initial polygon is closed!) is invariant under the map. Since $P_z$ is only defined as a projective equivalence class, this means that $M_z$ is only defined up to conjugation. Nevertheless, taking conjugation invariant functions (e.g. appropriately normalized eigenvalues) of $M_z$, we obtain, for every $z$, functions that are invariant under the pentagram map. It is shown in \cite{ovsienko2010pentagram} that the so-obtained functions commute under an appropriately defined Poisson bracket and turn the pentagram map into a discrete completely integrable system. See also \cite{ovsienko2013liouville} for a mode detailed proof. In our paper we utilize pretty much the same idea, but instead of looking at the eigenvalues of $M_z$ we will consider $M_z$ itself. It is not quite well-defined, but we will show that its $z$ derivative at $z = 1$ is, and that it coincides with Glick's operator.
\section{Infinitesimal monodromy}\label{Sec:im}
In this section we define the infinitesimal monodromy and show that it does not depend on the choices we need to make to formulate the definition, namely on the way we lift the scaling symmetry \eqref{eq: scaling} from projective equivalence classes of polygons to actual polygons.


We start with a closed $n$-gon, $P$, in $\P^2$. Let $[P] \in \Poly_n \,/\, \PGL(\P^2)$ be its projective equivalence class. Then, applying the scaling transformation $R_z$ given by \eqref{eq: scaling} to $[P]$, we get a path $R_z[P]$ in $ \Poly_n \,/\, \PGL(\P^2)$ such that $R_1[P] = [P]$. Now, choose a smooth in $z$ lift $P_z$ of the path $R_z[P]$ to the space $\Poly_n$ of actual twisted polygons such that $P_1 = P$ (we will construct an explicit example of such a lift later on). Denote by $M_z \in \PGL(\P^2)$ the monodromy of $P_z$. It is a family of projective transformations such that $M_1$ is the identity, $M_1 = \Id$. This family {does} depend on the choice of the lift $P_z$ of the path $R_z[P]$. However, as we show below, the tangent vector $dM_z/dz$ at $z = 1$ does not depend on that choice, and this is what we call the \textit{infinitesimal monodromy}.

\begin{definition}\label{def:im}
The \textit{infinitesimal monodromy} of a closed polygon $P$ is the derivative $dM_z / dz$ at $z=1$,
where $M_z$ is the monodromy of any path $P_z$ of polygons such that $P_1 = 1$ and $[P_z] = R_z[P]$.
\end{definition}
The infinitesimal monodromy is therefore a tangent vector to the projective group $\PGL(\P^2)$ at the identity, and, upon a choice of basis, can be viewed as a $3 \times 3$ matrix defined up to addition of a scalar matrix. Our main result can thus be formulated as follows.
\begin{theorem}[=Theorem \ref{thm1}]\label{thm2}
The tangent vector to $\PGL(\P^2)$ represented by Glick's operator $G_P$ coincides with the infinitesimal monodromy of $P$.
\end{theorem}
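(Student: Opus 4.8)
The plan is to compute the infinitesimal monodromy from one convenient explicit deformation and then match the answer with \eqref{eq:glick}; by the lift‑independence established in Section~\ref{Sec:im} any admissible family $P_z$ may be used. I would take the lift of $P_z$ to be a sequence $W_i(z)\in\V$ obeying the standard three‑term recurrence attached to a twisted polygon (as in \cite{ovsienko2010pentagram, schwartz2008discrete}), normalized so that $W_1(z),W_2(z),W_3(z)$ equal a fixed lift $V_1,V_2,V_3$ of $P$. Writing $C_i(z)$ for the companion matrix of the recurrence, so that the frame $\Xi_i(z):=(W_i(z)\,|\,W_{i+1}(z)\,|\,W_{i+2}(z))$ satisfies $\Xi_{i+1}(z)=\Xi_i(z)C_i(z)$, the monodromy is $M_z=\Xi_1\bigl(C_1(z)\cdots C_n(z)\bigr)\Xi_1^{-1}$ with $\Xi_1=(V_1\,|\,V_2\,|\,V_3)$. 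The $z$‑dependence sits only in the last column of each $C_i(z)$; its derivative at $z=1$ is dictated by the infinitesimal scaling $\dot x_i=x_i,\ \dot y_i=-y_i$ transported to the recurrence coefficients, which is the (standard) description of the scaling symmetry in difference‑equation coordinates.

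The crucial step is to differentiate the product at $z=1$. Since $P=P_1$ is closed, $M_1=\Id$, so $C_1(1)\cdots C_n(1)=\Id$; the Leibniz rule together with the telescoping identities $C_1(1)\cdots C_{k-1}(1)=(C_k(1)\cdots C_n(1))^{-1}$ and $\Xi_1\,C_1(1)\cdots C_{k-1}(1)=\Xi_k(1)$ then yields
\[
\frac{dM_z}{dz}\Big|_{z=1}=\sum_{k=1}^n \Xi_k\,\bigl(C_k'(1)C_k(1)^{-1}\bigr)\,\Xi_k^{-1},\qquad \Xi_k:=\Xi_k(1)=(V_k\,|\,V_{k+1}\,|\,V_{k+2}).
\]
Since $C_k'(1)$ has only its last column nonzero, each term is rank one, of the form $V\mapsto(V\wedge V_{k+1}\wedge V_{k+2})\,u_k$ with $u_k\in\mathrm{span}(V_k,V_{k+1},V_{k+2})$ read off from the scaling. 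Equivalently, in the language of the abstract: setting $\dot W_i=\tfrac{d}{dz}\big|_{z=1}W_i(z)$ and differentiating $W_{i+n}(z)=M_zW_i(z)$ gives $(dM_z/dz|_{z=1})V_i=\dot W_{i+n}-\dot W_i$, so the infinitesimal monodromy literally records the failure of the scaled polygon to close up, and is computed by solving the inhomogeneous variational recurrence for $\dot W_i$.

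To recognize this as $G_P$ I would rewrite Glick's operator in the same moving‑frame form: each rank‑one operator subtracted in \eqref{eq:glick} is the projection onto $\langle V_i\rangle$ along $\langle V_{i-1},V_{i+1}\rangle$, and substituting the expansion of $V$ in the basis $(V_{i-1},V_i,V_{i+1})$ into \eqref{eq:glick} and summing over $i$ collapses $G_P$ (after a determinant normalization of the lift) into $G_P(V)=\sum_k(V\wedge V_{k+1}\wedge V_{k+2})(V_k+V_{k+3})$. One then uses the three‑term recurrence to rewrite $V_k+V_{k+3}$ in terms of $V_{k+1},V_{k+2}$ and compares the two summations; matching them and absorbing the unavoidable trace discrepancy into a scalar matrix gives $dM_z/dz|_{z=1}=G_P$ up to a scalar matrix.

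The conceptual skeleton — differentiate a matrix product and telescope using triviality of the monodromy of a closed polygon — is short. The main obstacle is the bookkeeping in the last paragraph: importing the precise form of the scaling symmetry on the recurrence coefficients, fixing the normalization of the lift, and then checking that the two moving‑frame expressions genuinely coincide modulo a scalar matrix — a verification that depends delicately on signs, index shifts, and repeated use of the recurrence to pass between consecutive frames. Secondarily, one invokes the independence‑of‑lift lemma of Section~\ref{Sec:im} to license the convenient choices made at the start.
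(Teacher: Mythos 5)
Your proposal follows essentially the same route as the paper: realize the scaling by deforming the recurrence coefficients (the paper takes $a_i,\,z^{-1}b_i,\,z^{-1}c_i$ and checks via $x_{i+2}=a_ic_i/(b_ib_{i+1})$, $y_{i+2}=-b_{i+1}/(a_ia_{i+1})$ that this projects to $R_z$), write $M_z$ as a conjugated product of companion matrices, differentiate at $z=1$ using $M_1=\Id$ to telescope into the cyclic sum of rank-one operators $\Xi_k C_k'(1)C_k(1)^{-1}\Xi_k^{-1}$, and then convert to Glick's form using the three-term recurrence. The one caveat is that the final identification, which you defer as bookkeeping, is where the paper does genuine work (the chain of identities from \eqref{eq:si} to \eqref{eq15}, after which two of the three resulting terms cancel over the cyclic sum), and your proposed intermediate rewriting $G_P(V)=\sum_k(V\wedge V_{k+1}\wedge V_{k+2})(V_k+V_{k+3})$ does not appear correct as stated, though the surrounding strategy is sound and the discrepancy would surface and be fixable in carrying out that computation.
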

The proof will be given in Section \ref{Sec:proof}. But first we need to check that Definition \ref{def:im} makes sense, i.e. that the infinitesimal monodromy does not depend on the choice of the path $P_z$. This is established by the following:
\begin{proposition}
Let $P_z$ and $\tilde P_z$ be two families of polygons such that $P_1 = \tilde P_1$ is a closed polygon and $\tilde P_z$ is projectively equivalent to $P_z$ for every $z$. Then, for the monodromies $M_z$ and $\tilde M_z$ of these families, at $z = 1$ we have
$dM_z / dz = d\tilde M_z / dz$.
\end{proposition}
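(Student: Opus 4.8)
The plan is to exploit the relation between the monodromies of projectively equivalent twisted polygons. By hypothesis, for each $z$ near $1$ there is a projective transformation $\Phi_z \in \PGL(\P^2)$ taking the vertices of $P_z$ to the corresponding vertices of $\tilde P_z$; as recalled in Section~\ref{Sec:back}, this forces the monodromies to be conjugate,
\begin{align}
\tilde M_z = \Phi_z \circ M_z \circ \Phi_z^{-1}. \label{eq:conj}
\end{align}
The first step is to verify that $z \mapsto \Phi_z$ is a smooth path in $\PGL(\P^2)$ near $z = 1$. This holds because a projective transformation of $\P^2$ is uniquely determined by the images of four points in general position: taking four consecutive vertices of $P_z$ — which stay in general position for $z$ close to $1$ under our standing genericity assumption — one recovers $\Phi_z$ by a Cramer-type formula whose inputs depend smoothly on $z$.

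The second step is to record the values at $z = 1$. Since $P_1$ is a closed polygon, $M_1 = \Id$, and likewise $\tilde M_1 = \Id$. Since $P_1 = \tilde P_1$, the transformation $\Phi_1$ fixes four points in general position, so $\Phi_1 = \Id$ as well.

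The third step is to differentiate \eqref{eq:conj} at $z = 1$. Writing $c_z$ for conjugation by $\Phi_z$, we have $\tilde M_z = c_z(M_z)$, so the chain rule expresses $d\tilde M_z/dz|_{z=1}$ as the sum of the derivative of the path $z \mapsto c_z(M_1)$ at $z=1$ and the differential of $c_1$ at $M_1$ applied to $dM_z/dz|_{z=1}$. The first term vanishes because $M_1 = \Id$ and $c_z(\Id) = \Id$ for every $z$, so $z \mapsto c_z(M_1)$ is constant. The second term equals $dM_z/dz|_{z=1}$ because $\Phi_1 = \Id$, so $c_1$ is the identity map of $\PGL(\P^2)$ and hence so is its differential. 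Therefore $d\tilde M_z/dz|_{z=1} = dM_z/dz|_{z=1}$. Equivalently, in matrix terms: choose smooth lifts of $\Phi_z, M_z, \tilde M_z$ to $\mathrm{GL}_3$ equal to $I$ at $z = 1$; then \eqref{eq:conj} lifts to $\hat{\tilde M}_z = \lambda(z)\,\hat\Phi_z \hat M_z \hat\Phi_z^{-1}$ with $\lambda$ smooth and $\lambda(1) = 1$, and the Leibniz rule together with $\hat\Phi_1 = \hat M_1 = I$ gives $d\hat{\tilde M}_z/dz|_{z=1} = d\hat M_z/dz|_{z=1} + \lambda'(1) I$, so the two derivatives agree modulo scalar matrices — which is precisely the ambiguity inherent in the infinitesimal monodromy.

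The only slightly delicate point is the smoothness of $\Phi_z$ in the first step; everything else is the short computation above, whose point is that we are conjugating the \emph{identity} (because $M_1 = \Id$) by a path through the \emph{identity} (because $\Phi_1 = \Id$), so that all contributions of $\Phi_z$ to the derivative cancel.
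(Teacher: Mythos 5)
Your proof is correct and follows essentially the same route as the paper: differentiate the conjugation relation $\tilde M_z = \Phi_z M_z \Phi_z^{-1}$ at $z=1$ and observe that the extra term vanishes because $M_1 = \Id$ while the remaining term is unaffected because $\Phi_1 = \Id$. Your added remarks on the smoothness of $z \mapsto \Phi_z$ and on the scalar ambiguity in lifting to $\mathrm{GL}_3$ are sound refinements of points the paper leaves implicit, but they do not change the argument.
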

\begin{proof}
Let $\Phi_z$ be a projective transformation taking $P_z$ to $\tilde P_z$. Since $P_1 = \tilde P_1$, we have that $\Phi_1 = \Id$ (a generic $n$-gon in $\P^2$ does not admit any non-trivial projective automorphisms, provided that $n \geq 4$). Then we know that the monodromies are related by 
$\tilde M_z=\Phi_z M_z\Phi_z^{-1}.$
Differentiating this and using that $\Phi_1 = \Id$, we get
$$
\left.\frac{d}{dz}\right\vert_{z=1}\tilde M_z = 
\left.\frac{d}{dz}\right\vert_{z=1} M_z + \left[\left.\frac{d}{dz}\right\vert_{z=1} \Phi_z, M_1 \right].
$$
This identity in particular shows that the infinitesimal monodromy of a \textit{twisted} polygon is in general not well-defined, due to the extra commutator term in the right-hand side. But for a closed polygon we have $M_1 = \Id$, so the extra term vanishes and we get the desired identity.
\end{proof}

Before we proceed to the proof of the main theorem, let us mention one property of the infinitesimal monodromy: \begin{proposition}
The infinitesimal monodromy of a closed polygon is preserved by the pentagram map.
\end{proposition}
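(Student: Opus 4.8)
The plan is to combine three ingredients that are already available: the pentagram map commutes with the scaling symmetry on projective equivalence classes, the pentagram map on \emph{actual} twisted polygons preserves the monodromy, and the infinitesimal monodromy does not depend on the chosen lift (the preceding proposition). Once these are lined up correctly the statement is essentially formal.

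Concretely, let $T\colon \Poly_n \to \Poly_n$ denote the pentagram map and, for a twisted polygon $Q$, write $\mu(Q)\in\PGL(\P^2)$ for its monodromy. Start with a closed $n$-gon $P$, pick a smooth lift $P_z$ of the path $R_z[P]$ with $P_1 = P$, and put $M_z := \mu(P_z)$, so that by definition the infinitesimal monodromy of $P$ is $dM_z/dz$ at $z=1$. Now apply $T$ pointwise in $z$ to obtain the family $T P_z$. Since $T$ descends to the pentagram map on $\Poly_n/\PGL(\P^2)$ and commutes there with $R_z$, we get $[T P_z] = T[P_z] = T R_z[P] = R_z T[P] = R_z[TP]$; moreover $T P_1 = TP$, which is again a \emph{closed} polygon because $\mu(TP) = \mu(P) = \Id$. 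Hence $\{T P_z\}$ is a legitimate lift of the path $R_z[TP]$ taking the value $TP$ at $z=1$, so by the independence proposition the infinitesimal monodromy of $TP$ equals $\frac{d}{dz}\big|_{z=1}\mu(T P_z)$. But $T$ preserves the monodromy, so $\mu(T P_z) = \mu(P_z) = M_z$ for every $z$, and therefore the infinitesimal monodromy of $TP$ is $dM_z/dz$ at $z=1$, i.e. it equals the infinitesimal monodromy of $P$.

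The only point that needs a word of justification — and the closest thing to an obstacle — is the claim that, once a labeling convention for the vertices of the image polygon is fixed, $T$ preserves the monodromy exactly (not merely up to conjugation): this holds because a projective transformation commutes with the operation of intersecting diagonals, so $v_{i+n} = M(v_i)$ forces $v'_{i+n} = M(v'_i)$ with the same $M$, and any reindexing of the whole bi-infinite sequence leaves $M$ unchanged. One should also note that, as throughout the paper, all of this takes place on the dense open set where the corner invariants, the scaling symmetry, and the pentagram map are defined, which is where $P$ is assumed to lie; everything else in the argument is purely formal.
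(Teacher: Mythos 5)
Your argument is correct and is exactly the paper's proof, just written out in full: the paper's version is the one-liner ``the pentagram map preserves the monodromy and commutes with the scaling, and the infinitesimal monodromy is defined using only monodromy and scaling,'' and your pointwise application of $T$ to the family $P_z$ together with the lift-independence proposition is precisely the formalization of that. The extra remarks (exact rather than up-to-conjugation preservation of the monodromy, genericity) are sensible but not a departure from the paper's route.
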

\begin{proof}
The pentagram map preserves the monodromy and commutes with the scaling. The infinitesimal monodromy is defined using monodromy and scaling and is thus preserved as well.
\end{proof}

This result in fact follows from our main theorem, because Glick shows in \cite[Theorem 3.1]{glick2020limit} that his operator has this property. However, the proof based on Glick's definition is quite non-trivial, while in our approach it is immediate. The observation that the infinitesimal monodromy is preserved by the pentagram map was in fact our motivation to conjecture that it should coincide with Glick's operator. And, as we show below, this is indeed true.

\section{The infinitesimal monodromy and Glick's operator}\label{Sec:proof}

In this section we prove our main result, Theorem \ref{thm1} (=Theorem \ref{thm2}). To that end, we explicitly construct a deformation $P_z$ of a polygon $P$ as in Definition~\ref{def:im}. Such a deformation is not unique, but we know that the infinitesimal monodromy does not depend on the deformation.  We will in fact use this ambiguity to our advantage by choosing a deformation for which the infinitesimal monodromy can be computed explicitly. We will then compute it and see that it coincides with Glick's operator.\par

Consider a closed $n$-gon $P$. Lift the $n$-periodic sequence $\{v_i \in \P^2\}$ of its vertices to an $n$-periodic sequence of non-zero vectors $V_i \in \R^3$. Then, for every $i \in \Z$, there exist $a_i, b_i, c_i \in \R$ such that
\begin{align}\label{eq4}
    V_{i+3}=a_i V_{i+2} + b_i V_{i+1} + c_i V_i. 
\end{align}
Furthermore, for a generic polygon the numbers $a_i, b_i, c_i$ are uniquely determined because the points $v_{i}, v_{i+1}, v_{i+2}$ are not collinear so the vectors $V_i, V_{i+1}, V_{i+2}$ are linearly independent. Also, we have $c_i \neq 0$ for any $i$ because the points  $v_{i+1}, v_{i+2}, v_{i+3}$ are not collinear.  In addition to that, since $V_{i+n} = V_i$ we have that the sequences $a_i,b_i,c_i$ are $n$-periodic. Finally, notice that for fixed $a_i, b_i, c_i$ the sequence $V_i$ is uniquely determined by equation \eqref{eq4} and initial condition $V_0, V_1, V_2$. Indeed, given $V_0, V_1, V_2$ and using that $c_i \neq 0$, we can successively find all $V_i$'s from  \eqref{eq4}. This gives us a way to deform the polygon $P$: keeping $V_0, V_1, V_2$ unchanged, we deform the coefficients in \eqref{eq4}. Namely, consider the following equation
\begin{align}\label{eq6}
     V_{i+3} = a_iV_{i+2} +z^{-1}\big( b_i V_{i+1}+c_iV_i \big), 
\end{align}
We assume that the vectors $V_0, V_1, V_2$ do not depend on $z$ and coincide with the above-constructed lifts of vertices of $P$. For any $z \neq 0$, equation \eqref{eq6} has a unique solution with such initial condition. For $z = 1$ we recover the initial polygon, while for other values of $z$ we get its deformation. Note that for $i \neq 0,1,2$ the solutions $V_i$ of \eqref{eq6} are actually functions of the parameter $z$, i.e. $V_i = V_i(z)$. 
\begin{proposition}
Taking the solution of \eqref{eq6} such that $V_0, V_1, V_2$ are fixed lifts of vertices $v_0, v_1, v_2$ of $P$ and projecting the vectors $V_i \in \R^3$ to $\P^2$, we get a family $P_z$ of twisted polygons as in Definition~\ref{def:im}. Namely, we have that $P_1 = P$, and also $[P_z] = R_z[P]$, where $R_z$ is the scaling symmetry \eqref{eq: scaling}.
\end{proposition}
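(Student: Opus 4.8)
The plan is to verify the two assertions in turn: that $P_z$ is a genuine twisted $n$-gon with $P_1 = P$, depending smoothly on $z$, and --- the real content --- that $[P_z] = R_z[P]$. The second assertion I will check by computing the corner invariants of $P_z$ straight from their definition \eqref{eq2} and comparing with the formula \eqref{eq: scaling} for $R_z$.

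First, $P_1 = P$ is immediate: at $z = 1$ equation \eqref{eq6} is \eqref{eq4}, and $V_0, V_1, V_2$ were chosen as the fixed lifts of $v_0, v_1, v_2$, so $V_i(1) = V_i$ for all $i$. For $z \neq 0$ the solution $V_i(z)$ of \eqref{eq6} with these initial data is rational in $z$ (for $i \geq 3$ it is a polynomial in $z^{-1}$), so $P_z$ depends smoothly on $z$. To see that $P_z$ is twisted, observe that for fixed $z$ the coefficients $a_i, z^{-1}b_i, z^{-1}c_i$ of \eqref{eq6} are $n$-periodic in $i$ (because $a_i, b_i, c_i$ are), so the index shift $i \mapsto i + n$ carries solutions of \eqref{eq6} to solutions. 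The three coordinate sequences $i \mapsto (V_i(z))_k$, $k = 1,2,3$, are scalar solutions of the associated third-order recurrence, and are linearly independent because a covector annihilating all $V_i(z)$ would annihilate $V_0, V_1, V_2$, which span $\R^3$; hence they form a basis of the $3$-dimensional solution space. Consequently there is a unique invertible matrix $\widehat M_z$, depending rationally on $z$, with $V_{i+n}(z) = \widehat M_z V_i(z)$ for all $i$ (define $\widehat M_z$ on $V_0, V_1, V_2$ and propagate through \eqref{eq6} in both directions, using $c_i \neq 0$ for the backward step). Projectivizing, $v_{i+n}(z) = M_z(v_i(z))$ with $M_z = [\widehat M_z] \in \PGL(\P^2)$, so $P_z$ is a twisted $n$-gon; note $\widehat M_1 = \Id$ since $V_i(1)$ is $n$-periodic.

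For $[P_z] = R_z[P]$: since corner invariants are projectively invariant and form a coordinate system on $\Poly_n / \PGL(\P^2)$, it suffices to show the corner invariants of $P_z$ equal $(z x_i, z^{-1}y_i)$, where $(x_i, y_i)$ are those of $P$. I will first express the corner invariants of an arbitrary polygon through any lift satisfying a recurrence $W_{i+3} = \alpha_i W_{i+2} + \beta_i W_{i+1} + \gamma_i W_i$ with $\gamma_i \neq 0$. Write $[ABC] := \det(A,B,C)$ and $d_j := [W_j, W_{j+1}, W_{j+2}]$; the recurrence gives $d_{j+1} = \gamma_j d_j$. Each of the four points in \eqref{eq2} is either a vertex or the intersection of two of the lines $(W_k, W_{k+1})$, and such an intersection is expressed in the frame $\{W_k, W_{k+1}\}$ by the vector identity $(A \times B) \times (C \times D) = [A,C,D]\,B - [B,C,D]\,A$. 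Substituting the recurrence to rewrite all the determinants that appear in terms of $\alpha, \beta, \gamma$ and the $d_j$, and simplifying the cross-ratio \eqref{eq1}, one obtains (up to conventions of sign and index labeling, which are immaterial)
\[
x_i = \frac{\alpha_{i-2}\,\gamma_{i-1}}{\beta_{i-2}\,\beta_{i-1}}, \qquad y_i = -\,\frac{\beta_{i-1}}{\alpha_{i-1}\,\alpha_{i-2}}.
\]
This determinant bookkeeping is the one genuinely computational part and is what I expect to be the main obstacle, though it is routine --- it is enough to carry out the case of $x_i$, as $y_i$ is entirely analogous.

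Finally, apply this to $P_z$: its lift $V_i(z)$ satisfies \eqref{eq6}, so $\alpha_i = a_i$, $\beta_i = z^{-1}b_i$, $\gamma_i = z^{-1}c_i$, and the displayed formulas give the corner invariants of $P_z$ as
\[
\frac{a_{i-2}\,(z^{-1}c_{i-1})}{(z^{-1}b_{i-2})(z^{-1}b_{i-1})} = z\,\frac{a_{i-2}\,c_{i-1}}{b_{i-2}\,b_{i-1}}, \qquad -\frac{z^{-1}b_{i-1}}{a_{i-1}\,a_{i-2}} = z^{-1}\!\left(-\frac{b_{i-1}}{a_{i-1}\,a_{i-2}}\right).
\]
Taking $z = 1$ in the same formulas gives the corner invariants $(x_i, y_i)$ of $P = P_1$, so the corner invariants of $P_z$ are precisely $(z x_i, z^{-1}y_i)$, which by \eqref{eq: scaling} are those of $R_z[P]$. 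Hence $[P_z] = R_z[P]$; together with $P_1 = P$ this exhibits $P_z$ as a family of the sort required in Definition~\ref{def:im}.
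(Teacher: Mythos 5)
Your proof is correct and follows essentially the same route as the paper: establish quasi-periodicity of the solution of \eqref{eq6} to get a twisted polygon, then express the corner invariants in terms of the recurrence coefficients and observe that the substitution $b_i \mapsto z^{-1}b_i$, $c_i \mapsto z^{-1}c_i$ produces exactly the scaling $(x_i, y_i) \mapsto (zx_i, z^{-1}y_i)$. The only difference is that where you sketch the determinant computation of the corner invariants yourself, the paper simply invokes the argument of \cite[Lemma 4.5]{ovsienko2010pentagram}; your stated formula carries a minor index discrepancy against the paper's ($\gamma_{i-1}$ versus $c_{i-2}$), but this is harmless since the argument uses only the homogeneity degrees in $\beta$ and $\gamma$.
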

\begin{proof}
First note that if a sequence $V_i$ is a solution of \eqref{eq6} with given initial condition, then $V_i(z) \neq 0$ for any $i$ and every $z$ sufficiently close to $1$, so we can indeed project those vectors to get a sequence of points in $\P^2$. Indeed, for $z = 1$ this is so by construction and hence is also true for nearby values of $z$ by continuity (in fact, one can show that $V_i(z) \neq 0$ for any $z \neq 0$, not necessarily close to $1$). 

Further, observe that since the coefficients of equation~\eqref{eq6} are periodic, its solution is quasi-periodic: $V_{i+n}(z) = M_z V_i(z)$ for a certain invertible matrix $M_z$ depending on $z$. Therefore, the projections $v_i(z) \in \P^2$ of the vectors $V_i(z) \in \R^3$ form a twisted polygon whose monodromy is the projective transformation defined by $M_z$. Furthermore since equations~\eqref{eq4} and \eqref{eq6} agree for $z = 1$, and the initial conditions are the same too, it follows that for the so-obtained family $P_z$ of twisted polygons we have $P_1 = P$. Finally, we need to show that the projective equivalence classes of $P$ and $P_z$ are related by scaling $[P_z] = R_z[P]$. To that end, we use formulas expressing corner invariants in terms of coefficients of a recurrence relation satisfied by the lifts of vertices.
Arguing as in the proof of \cite[Lemma 4.5]{ovsienko2010pentagram}
one gets the following expressions for the corner invariants of $P$:
\begin{align}
    x_{i+2}=\frac{a_ic_i}{b_i b_{i+1}} \qquad y_{i+2}=-\frac{b_{i+1} }{a_i a_{i+1}}. 
\end{align}
Accordingly, since equations \eqref{eq4} and \eqref{eq6} encoding $P$ and $P_z$ are connected by the transformation $b_i \mapsto z^{-1}b_i$, $c_i \mapsto z^{-1}c_i $, the corner invariants of $P_z$ are given by
\begin{align}
    x_{i+2}(z)=\frac{a_i(z^{-1}c_i)}{(z^{-1}b_i) (z^{-1}b_{i+1})} = z x_{i+2} \qquad y_{i+2}(z)=-\frac{z^{-1}b_{i+1} }{a_i a_{i+1}} = z^{-1}y_{i+2}.
\end{align}
Thus, the projective equivalence classes of the polygons $P$ and $P_z$ are indeed related by scaling, as desired.
\end{proof}

We are now in a position to prove our main result. To that end, we will compute the monodromy of the polygon defined by \eqref{eq6}, take its derivative at $z=1$, and hence find the infinitesimal monodromy. 

We put the vectors $V_i(z)$ into columns of matrices as follows: define $$W_i(z):=\big[V_{i+2}(z) \quad V_{i+1}(z) \quad V_i(z) \big].$$ Then the relation \eqref{eq6} gives us the matrix equation
$$W_{i+1}(z)=W_i(z) U_i(z),$$ where 
\begin{align}\label{eq:ui}
    U_i(z):=\begin{bmatrix}
a_i & 1 & 0 \\
z^{-1}b_i & 0 & 1 \\
z^{-1}c_i & 0 & 0
\end{bmatrix}.
\end{align}
We stop explicitly recording the dependence on $z$ as it is notationally cumbersome.
Inductively, we have that 
\begin{align}\label{eq11}
W_i = W_0U_0U_1 \dots U_{i-1}.
\end{align} 
In particular,
$$
   W_{n}=W_0U
$$
where $U:= U_0U_{1}\dots U_{n-1}$. 
At the same time, we have that $V_{i+ n}=M_zV_i$, where $M_z$ is a matrix representing the monodromy of the polygon defined by the vectors $V_i$.
This means that $W_{n}=M_zW_0$. Relating these two expressions for $W_{n}$ we get 
\[ W_0U=M_zW_0  \quad \iff \quad M_z=W_0UW_0^{-1}.\]
 Notice that because $V_0,V_1,V_2$ are fixed we have that $W_0=[V_0 \quad V_1 \quad V_2]$ is constant while $z$ varies. This means that all the dependence of $M_z$ on $z$ is contained in the expression for $U$. This gives
\begin{align*}
   \frac{dM_z}{dz} &=\frac{d}{dz} \left( W_0 U_0 \dots U_{n-1}  W_0^{-1} \right) 
   \\ &=  \sum_{i=0}^{n-1} W_0 U_0 \dots U_{i-1}\frac{dU_i}{dz}U_{i+1} \dots U_{n-1}   W_0^{-1} 
   = \sum_{i=0}^{n-1} W_i \frac{dU_i}{dz}U_{i+1} \dots U_{n-1}   W_0^{-1},
\end{align*}
where the last equality uses that $W_{i}=W_0U_0\dots U_{i-1}$. Further, observe that \begin{align}U_{i+1}\dots U_{n-1}&=(U_0\dots U_{i})^{-1}(U_0\dots U_{n-1}) =(W_0^{-1}W_{i+1})^{-1}(W_0^{-1}W_{n})=W_{i+1}^{-1}W_{n}.\end{align} 
Also using that $W_n W_0^{-1}=M_z$, we get
\begin{align*}
   \frac{dM_z}{dz} = \sum_{i=0}^{n-1} W_i \frac{dU_i}{dz} W_{i+1}^{-1}W_{n}  W_0^{-1} = \left(\sum_{i=0}^{n-1} W_i \frac{dU_i}{dz} W_{i+1}^{-1}\right)M_z.
\end{align*}
Further, using that the monodromy satisfies $M_1=\Id$ because we started with a closed $n$-gon, we arrive at
$$
\left.\frac{dM_z}{dz}\right|_{z=1} \!\!\!= \,\,\sum_{i=0}^{n-1} S_i,
$$
where
$$
S_i := \left.\left(W_i \frac{dU_i}{dz} W_{i+1}^{-1}\right)\,\right\vert_{z=1}.
$$
Now, we will show that summing these $S_i$ with $i=0,1,\ldots,n-1$ gives \eqref{eq:glick} up to a scalar matrix. Using \eqref{eq:ui}, we get
\begin{align*}
\left.\frac{dU_i}{dz}\right|_{z=1} \!\!=\,\, 
\begin{bmatrix}
0 & 0 & 0 \\
-b_i & 0 & 0 \\
-c_i & 0 & 0
\end{bmatrix}.
\end{align*}
Further, observe that for $z=1$ the matrix $W_i$ sends the standard basis to the lifts $V_{i+2}, V_{i+1}, V_i$ of the vertices of $P$. Therefore $W_{i+1}^{-1}$ takes the vectors $V_{i+3},V_{i+2},V_{i+1}$ to the standard basis, from which we find that the matrix $S_i$ acts on these vectors as
$$
 V_{i+3} \mapsto -b_iV_{i+1} - c_iV_i,
       \quad V_{i+2} \mapsto 0
       \quad V_{i+1} \mapsto 0.
$$
Using also \eqref{eq4}, we find that
$$
S_i(V_i) = \frac{1}{c_i}S_i(V_{i+3}) =  -\frac{b_i}{c_i} V_{i+1} - V_i,
$$
which means that 
\begin{align}
    S_i(V)=\frac{|V,V_{i+1},V_{i+2}|}{|V_i,V_{i+1},V_{i+2}|}\Big(-V_i-\frac{b_i}{c_i}V_{i+1}\Big) \quad \forall \,\,V \in \R^3, \label{eq13}
\end{align}
where $|A,B,C|$ is the determinant of the matrix with columns $A, B, C$. Further, rewriting \eqref{eq4} as
$$
-V_i-\frac{b_i}{c_i}V_{i+1} = \frac{a_i}{c_i}V_{i+2} -\frac{1}{c_i}V_{i+3}
$$
we get
\begin{align*}
    S_i(V)=
    \frac{|V_{i+1},V_{i+2},V|}{|V_{i+1},V_{i+2},V_i|}\Big( \frac{a_i}{c_i}V_{i+2} -\frac{1}{c_i}V_{i+3} \Big) = \frac{|V_{i+1},V_{i+2},V|}{|V_{i+1},V_{i+2},c_i^{-1}V_{i+3}|}\Big( \frac{a_i}{c_i}V_{i+2} -\frac{1}{c_i}V_{i+3} \Big),
\end{align*}
where in the last equality we used \eqref{eq4} to express $V_i$ in terms of $V_{i+1}, V_{i+2}, V_{i+3}$. 
This can be rewritten as
\begin{align}\label{eq:si}
\begin{aligned}
    S_i(V)= \frac{|V_{i+1},V_{i+2},V|}{|V_{i+1},V_{i+2},V_{i+3}|} a_iV_{i+2} - \frac{|V_{i+1},V_{i+2},V|}{|V_{i+1},V_{i+2},V_{i+3}|} V_{i+3},
\end{aligned}
\end{align}
and the first term can be further rewritten as
\begin{align}\label{eq:si2}
\begin{aligned}
    \frac{|V_{i+1},V_{i+2},V|}{|V_{i+1},V_{i+2},V_{i+3}|} a_iV_{i+2} &= 
    \frac{|V_{i+1},a_iV_{i+2},V|}{|V_{i+1},V_{i+2},V_{i+3}|} V_{i+2} 
   = \frac{|V_{i+1},V_{i+3}-c_iV_i,V|}{|V_{i+1},V_{i+2},V_{i+3}|} V_{i+2} \\ &= - \frac{|V_{i+1},V,V_{i+3}|}{|V_{i+1},V_{i+2},V_{i+3}|} V_{i+2} + \frac{|V_{i},V_{i+1},V|}{|V_{i+1},V_{i+2},V_{i+3}|} c_iV_{i+2}
   \end{aligned}
\end{align}
where in the second equality we used \eqref{eq4} to express $a_iV_{i+2}$ in terms of $V_i, V_{i+1}, V_{i+3}$. Furthermore, using \eqref{eq4} to express $V_{i+3}$ in terms of $V_i, V_{i+1}, V_{i+2}$, the last term in the latter expression can be rewritten as
\begin{align}\label{eq:si3}
\frac{|V_{i},V_{i+1},V|}{|V_{i+1},V_{i+2},V_{i+3}|} c_iV_{i+2} = \frac{|V_{i},V_{i+1},V|}{|V_i, V_{i+1},V_{i+2}|} V_{i+2}.
\end{align}
Combining \eqref{eq:si}, \eqref{eq:si2}, and \eqref{eq:si3}, we arrive at the following expression
\begin{align}
    S_i(V)&=
  -\frac{|V_{i+1},V,V_{i+3}|}{|V_{i+1},V_{i+2},V_{i+3}|} V_{i+2} + \frac{|V_{i},V_{i+1},V|}{|V_i, V_{i+1},V_{i+2}|} V_{i+2}
    - \frac{|V_{i+1},V_{i+2},V|}{|V_{i+1},V_{i+2},V_{i+3}|} V_{i+3}. \label{eq15}
\end{align}
Since the last two terms only differ by a shift in index, and the sequence of $V_i$'s in $n$-periodic, we get 
$$
\left.\frac{dM_z}{dz}\right|_{z=1} \!\!\!(V)= -\sum_{i=0}^{n-1} S_i(V) = \sum_{i=0}^{n-1} \frac{|V_{i+1},V,V_{i+3}|}{|V_{i+1},V_{i+2},V_{i+3}|} V_{i+2} = -\sum_{i=0}^{n-1} \frac{|V_{i-1},V,V_{i+1}|}{|V_{i-1},V_{i},V_{i+1}|} V_{i}, 
$$
which coincides with Glick's operator \eqref{eq:glick} up to a scalar matrix. Thus, Theorem \ref{thm1} (=Theorem~\ref{thm2}) is proved.

\bibliographystyle{plain}
\bibliography{ref}

\end{document}